\documentclass[aps,prl,reprint,superscriptaddress,longbibliography]{revtex4-2}

\usepackage[T1]{fontenc}
\usepackage{lmodern}
\usepackage{amsmath,amssymb,bm,mathtools}
\usepackage{amsthm}
\usepackage{graphicx}
\usepackage{booktabs,array}
\usepackage{microtype}
\usepackage[colorlinks=true,linkcolor=blue,citecolor=blue,urlcolor=blue]{hyperref}

\newcommand{\avg}[1]{\left\langle #1\right\rangle}
\newcommand{\fout}{\alpha_{\eta}}
\newcommand{\Dlt}{\Delta\bar C}
\newcommand{\norm}[1]{\left\lVert #1\right\rVert}
\newtheorem{lemma}{Lemma}[section]

\begin{document}

\title{Coverage--information uncertainty for single-helicity light}

\author{Hyoseok Park}
\email{phs137@o.cnu.ac.kr}
\affiliation{Department of Physics, Chungnam National University, Daejeon 34134, Republic of Korea}

\date{August 24, 2026}

\begin{abstract}
A circularly polarized far field must be dark somewhere: its amplitude is a section of a twisted line bundle. We prove this exacts a joint coverage-rate cost that cannot vanish faster than the inverse fourth power of the mode order, and we construct an explicit family of fields attaining that exponent. At lowest order the global optimum is the spin-1 anticoherent state, independently known as optimal for rotations about an unknown axis. The same floor caps photon collection in single-helicity quantum links; the opposite helicity removes it at a cost set by the inverse mode count.
\end{abstract}

\maketitle

A link that shares no orientation with its receiver wants circular polarization, because helicity is the polarization label that rotations preserve. Compact orientation-agnostic links carry it~\cite{Su2022}, nanophotonic structures route it through the spin--orbit coupling of light~\cite{Bliokh2015}, chiral interfaces write an emitter's spin onto it~\cite{Lodahl2017,Sollner2015}, and polarization-encoded quantum channels ride the circular basis between frames that share no alignment~\cite{Laing2010,Kimble2008}. These receivers never see the full field, only the one component their optics select, and the classical limits are pitched at the whole: quasi-isotropy theorems bound how uniform an unrestricted vector field can be~\cite{ChoKwon2025}, and gain, directivity, superdirectivity, stored-energy, and capacity bounds constrain the field as a whole~\cite{Chu1948,Fante1969,GustafssonCapek2019,Ehrenborg2021,Miller2019,KuangMiller2025}. None of them bears on a single helicity. Fixing one changes the problem, and the change is topological.

The local circular basis twists as one moves over the sphere of directions. A helicity-$\sigma$ amplitude is therefore a section of a complex line bundle with first Chern number $-2\sigma$, and a section of a twisted bundle must vanish: every fixed-helicity far field has at least one dark direction~\cite{Palmerduca2024,Palmerduca2025}. Null and polarization-singularity theorems express the same obstruction in classical fields and fix the total index of the far-field singularities to two~\cite{Scott1966,Chen2019,Peng2021,Wen2025,Silveirinha2023}. What these results give is existence. They do not say what the null costs.

The cost is not set by topology but by bandwidth. A dark direction of zero solid angle is harmless; a finite patch of directions below a gain threshold is a coverage outage. A field of order $L$ carries $d_L=L(L+2)\sim L^2$ complex modes, rises no faster than order $L$ from a zero, and climbs back to full brightness only over an angular width $1/L$, paying for the climb elsewhere: shrinking the dark patch forces the pattern to concentrate, and concentration drains the orientation-averaged rate through the concavity of $\log(1+\rho g)$. Coverage and rate are tied together by the forced null, and at wavelength scale the tie is not a small correction: the best possible single-helicity emitter is dark over nearly a fifth of orientations. Below we make the cost exact as a product law over all fields of order $L$, trace it to a dielectric sphere, and follow it into quantum links that read one helicity.

\emph{Setup.---}Let $\bm u\in S^2$ be a propagation direction in a local helicity frame carried by rotations. A unit-power field of helicity $\sigma=\pm1$ and maximum order $L$ is
\begin{equation}
 a(\bm u)=\sum_{\ell=1}^{L}\sum_{m=-\ell}^{\ell}c_{\ell m}\,{}_{\sigma}Y_{\ell m}(\bm u),\qquad \sum_{\ell m}|c_{\ell m}|^2=1,
 \label{eq:field}
\end{equation}
in spin-weighted spherical harmonics~\cite{Goldberg1967}. The amplitude depends on the frame, but its modulus, its zeros, and every quantity below do not; on a frame overlap $a$ picks up only the helicity phase, and the impossibility of fixing that phase globally is the bundle obstruction. With $d\mu=d\Omega/4\pi$ write the normalized gain $g=4\pi|a|^2$, so $\avg{g}=1$. For a threshold $0<\eta<1$ the coverage outage is the fraction of directions below it,
\begin{equation}
 \fout=\avg{\mathbf 1[g\le\eta]},
 \label{eq:outage}
\end{equation}
and for the matched channel $y=\sqrt{\rho g}\,x+n$ at reference signal-to-noise ratio $\rho$ the loss of orientation-averaged coherent capacity, relative to a uniform pattern, is
\begin{equation}
 \Dlt=\log_2(1+\rho)-\avg{\log_2(1+\rho g)}\ge0,
 \label{eq:deficit}
\end{equation}
zero only when the gain is uniform. Under a uniform orientation prior $\fout$ is the probability that the link finds itself below threshold and $\Dlt$ the mean rate it gives up to an isotropic pattern of the same power; but neither refers to a device, and both are functionals of the far-field shape alone. The pair is deliberate, since link budgets work with threshold fractions and information theory with average rates; a bound on the product means that neither quantity can be improved at no cost to the other. We report $\eta=1/2$ and $\rho=10$.

\emph{The bound.---}For fixed $\eta$ and $\rho$ there is a constant $c_{\eta,\rho}>0$, independent of $L$, such that every field of Eq.~\eqref{eq:field} obeys
\begin{equation}
 \boxed{\ \fout\,\Dlt\ \ge\ c_{\eta,\rho}\,L^{-4}.\ }
 \label{eq:bound}
\end{equation}
The peak gain $H=\|g\|_\infty$ controls both sides. A spherical Bernstein inequality~\cite{Arestov1982,Backus1979} caps the slope of a degree-$L$ field at $L\|a\|_\infty$, so the gain leaves its forced zero no faster than order $L^2$; the sub-threshold cap around the zero then has area at least $c_\eta/(HL^2)$, giving
\begin{equation}
 \fout\ \ge\ \frac{c_\eta}{HL^2}.
 \label{eq:capalpha}
\end{equation}
The same slope bound holds at a maximum, where a peak of height $H$ occupies a cap of area of order $L^{-2}$; the strict concavity of the rate then forces
\begin{equation}
 \Dlt\ \ge\ \frac{c_\rho H}{L^2}.
 \label{eq:capdelta}
\end{equation}
The two caps are one statement made twice: a band-limited field cannot vary faster than its highest mode oscillates, so whatever the gain does, it does over at least an angular cell of size $1/L$, at the forced zero and at the peak alike. A superdirective peak shrinks the dark region through Eq.~\eqref{eq:capalpha} but deepens the rate loss through Eq.~\eqref{eq:capdelta} by the same factor of $H$. Multiplying removes the peak and leaves Eq.~\eqref{eq:bound}. Because $d_L\sim L^2$, Eq.~\eqref{eq:bound} is an inverse-square law in the number of one-helicity modes. The full proof, with constants, is in the Supplemental Material below~\cite{SM}.

Separately optimized, the two costs would suggest a smaller product. Eq.~\eqref{eq:capalpha} alone permits the outage to fall as low as order $L^{-4}$ if the field is allowed to grow superdirective, $H\sim L^2$, while Eq.~\eqref{eq:capdelta} alone floors the deficit at order $L^{-2}$ for bounded peak. No single field realizes both extremes at once, since the same $H$ that buys one spends the other. This is the content of Eq.~\eqref{eq:bound}: the coverage gap and the rate deficit cannot be minimized independently, because both are set by the same forced null.

The relation of Eq.~\eqref{eq:bound} to the classical limits is one of kind rather than strength. Chu's bound and its descendants limit how much gain or bandwidth a small volume can support~\cite{Chu1948,Fante1969,GustafssonCapek2019}; modal bounds count the strong channels an aperture can carry~\cite{Miller2019,KuangMiller2025,Ehrenborg2021}; quasi-isotropy theorems limit the flatness of the total power pattern~\cite{ChoKwon2025}. Each of these caps a strength and relaxes as the structure grows richer, whereas Eq.~\eqref{eq:bound} caps how even one helicity can be: it does not care how much power the source spends or how large its aperture is, only how many one-helicity modes it commands, and it alone rests on a topological premise. A receiver that accepts both helicities removes the floor, a point quantified below.

\emph{Sharpness.---}Equation~\eqref{eq:bound} is a lower bound, and a lower bound is only as interesting as its tightness. We test it by computing the achievable frontier directly. Both observables are invariant under rigid rotation, so they depend only on the shape of the coefficient vector, and the achievable pairs $(\fout,\Dlt)$ form a low-dimensional set at each $L$. The deficit is smooth; the outage is a set measure, so we minimize an annealed logistic surrogate and polish the exact fraction, from several dozen random and structured starts per order~\cite{SM}. The evaluation grid matters here: a minimizer scored on a coarse grid hides its sub-threshold cap between the nodes and reports an outage of zero, so every candidate is evaluated on grids that resolve the smallest cap Eq.~\eqref{eq:capalpha} allows and confirmed at doubled resolution. What the search returns are explicit fields, so the frontier is achievable by construction even where global optimality is not certified; bounded below by Eq.~\eqref{eq:bound} and above by these fields, the true frontier is pinned between the two.

The result is the law of Fig.~\ref{fig:bound}(b): the smallest coverage gap and rate deficit found each scale as $L^{-2}$, with fitted exponents $-1.97$ and $-1.96$ over $L=4$ to $16$, so their product follows $L^{-4}$, and the compensated product $\fout\Dlt\,L^4$ stays between $0.08$ and $0.22$ for $L=2$ to $16$. The exponent does not rest on the search. Smoothing the two-zero square-wave profile with a Jackson kernel gives an explicit axisymmetric field at every order, flat away from two simple antipodal zeros, whose outage and deficit are each of order $L^{-2}$ with proven constants, and whose lowest member is the anticoherent state of the next paragraph~\cite{SM}. Pinched between this family and Eq.~\eqref{eq:bound}, the smallest product is $\Theta(L^{-4})$: the exponent four is exact. What the search adds is the prefactor, which the proven constants miss on both sides, the lower bound sitting four orders below the achieved product and the family about two above it~\cite{SM}. The two minima belong to different fields, but the frontier between them is shallow, and neither endpoint is superdirective: the peak gain of the coverage-optimal field grows only from $1.5$ at $L=2$ to $5.1$ at $L=16$, far below the supergain scale $L^2$. The cheap way to empty the sub-threshold set is to spread the pattern flat and keep nothing but the unavoidable null; a field driven superdirective to shrink its dark cap instead sends most of the sphere below threshold.

At $L=1$ there is no frontier to trace. Up to rotation and phase, every spin-1 field is fixed by one parameter, the separation of its two Majorana stars, and this reduction is exact enough to certify a global optimum rather than merely locate one: the anticoherent state~\cite{Zimba2006} minimizes both observables at once for every field in the family, a statement certified with analytic derivative constants rather than by search alone~\cite{SM}. Every other field's gain majorizes it, which extends the rate optimum to every signal-to-noise ratio; a direct threshold scan does the same for the outage~\cite{SM}. This multiplet, the spin-1 state with vanishing angular-momentum expectation, splits the topological charge of the forced null into two simple zeros at antipodal poles and spreads the gain, $\tfrac32\sin^2\theta$, as evenly as a single order allows. These are the states known in quantum metrology as the extremal, least classical points of the spin family~\cite{Bjork2015} and as the optimal detectors of rotations about an unknown axis~\cite{Chryssomalakos2017,Martin2020,Goldberg2018}. The coincidence has a common root: rotation sensing does best when the Majorana stars of the probe sit as far apart on the sphere as they will go, so that no axis is blind, and single-helicity coverage does best when the zeros of the far field are spread the same way, so that no direction is dark. Whether the correspondence persists at higher order, where our optimal fields stay flat but anticoherence comes in degrees, we leave open.

\begin{figure}[t]
\centering
\includegraphics[width=\columnwidth]{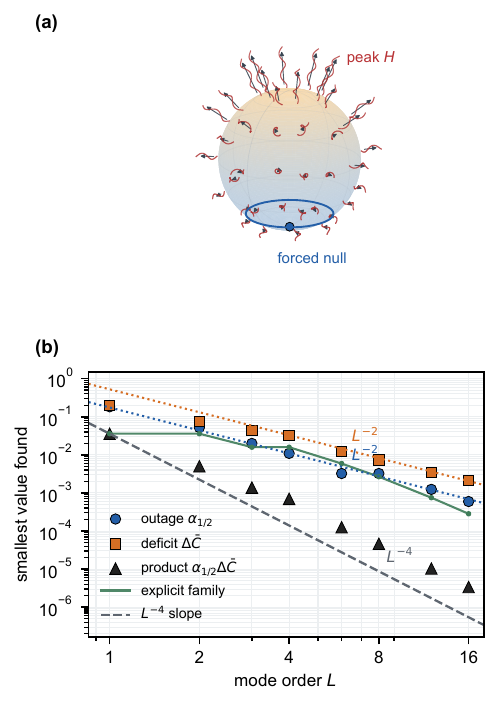}
\caption{The forced null and the coverage--information bound. (a) A single-helicity gain on the momentum sphere: arrows are outgoing rays and the red coils mark their fixed circular polarization. Topology forces a dark direction (blue cap and marker), and finite order $L$ limits how sharply the gain climbs out of it; the peak gain $H$ that narrows the dark cap is the same one that raises the rate deficit. (b) Optimizing over all order-$L$ fields at $\eta=1/2$, $\rho=10$: the smallest outage $\fout$ and deficit $\Dlt$ found each scale as $L^{-2}$ (fitted slopes $-1.97$, $-1.96$), so their product follows the $L^{-4}$ law of Eq.~\eqref{eq:bound}. The thin line is the explicit attaining family, which begins at the $L=1$ optimum, the anticoherent state, and pins the exponent at every order; the dashed line shows the $L^{-4}$ slope anchored below the data (the proven prefactor is smaller~\cite{SM}). The self-dual spheres of Fig.~\ref{fig:realization} lie above this frontier [Fig.~\ref{fig:realization}(c)].}
\label{fig:bound}
\end{figure}

\emph{Realization.---}The low-order single-helicity fields are not abstractions. A self-dual (dual-symmetric) dielectric sphere, whose electric and magnetic Mie coefficients coincide, $a_\ell=b_\ell$, scatters one circular polarization into a helicity-pure field with the topologically forced backward null~\cite{Fu2013,NietoVesperinas2011}; the electromagnetic duality symmetry is what protects the helicity~\cite{Fernandez2013}. A homogeneous sphere cannot satisfy the dual condition at every order, but a radially layered one has enough index freedom to enforce it up to a chosen order and suppress the orders above~\cite{SM}. A crystalline-silicon sphere of radius $122$~nm~\cite{Green2008} approximates the ideal dual dipole, $g(\theta)=\tfrac34(1+\cos\theta)^2$ with $\fout=0.408$ and $\Dlt=0.658$; the full-wave sphere itself gives $0.42$ and $0.69$ at $\rho=10$. Ge/TiO$_2$ and Si/SiN core--shells~\cite{Palik1985} add the quadrupole and octupole, each with opposite-helicity power below $3\times10^{-4}$ [Fig.~\ref{fig:realization}]. Full-wave scattering confirms the null and the multipole content, and both observables, like the helicity purity, are read from the polarization-resolved pattern~\cite{Tischler2014,SM}. These passive spheres sit above the optimal frontier [Fig.~\ref{fig:realization}(c)], and not monotonically closer with order: duality fixes the helicity, not the evenness of the pattern, and the forward-peaked $L=3$ design lands above the $L=2$ one, a gap that traces to the single azimuthal number a plane wave excites at each multipole order~\cite{SM}. Closing that separation is the design problem for a near-isotropic single-helicity source; a reconfigurable aperture of radius $R\simeq L/k$ commands every coefficient and traces the frontier itself.

Because the largest accessible order is set by electrical size, $L\sim ka$~\cite{Chu1948}, the floor $c\,L^{-4}$ is steepest for the most compact sources, which is where an orientation-agnostic circular link operates. The hard cutoff in Eq.~\eqref{eq:field} is not doing hidden work here: a field that is only approximately band limited obeys the same bound with a rescaled constant, because the multipole content of a source of size $ka$ falls off steeply beyond $L\sim ka$ and a small uniform tail only shifts the cap estimates~\cite{SM}. Even an optimal single-helicity emitter of size $ka\approx1$ leaves a fraction $0.18$ of orientations below half its mean gain and loses $0.2$ bits of orientation-averaged rate; the silicon sphere leaves $0.42$. The penalty is an order-one cost on small sources, not an asymptotic correction, and it applies before any specific antenna or scatterer is designed.

\begin{figure}[t]
\centering
\includegraphics[width=\columnwidth]{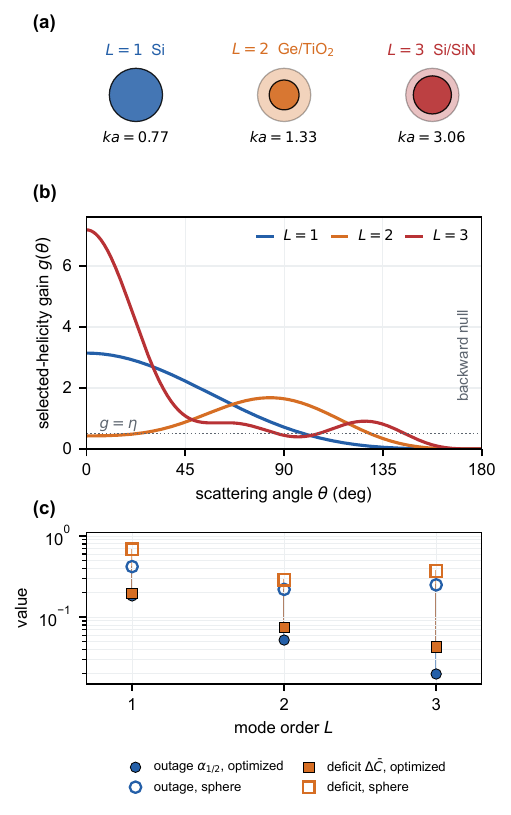}
\caption{(a,b) Self-dual dielectric nanospheres realizing the low-order single-helicity fields: a silicon sphere ($L=1$), a Ge/TiO$_2$ core--shell ($L=2$), and a Si/SiN core--shell ($L=3$). Each is helicity-pure, with opposite-helicity power below $3\times10^{-4}$, and each carries the topologically forced backward null. (c) Each sphere's outage and deficit against the optimized frontier at the same $L$ (Fig.~\ref{fig:bound}(b); numeric values and the reason for the gap in the Supplemental Material~\cite{SM}).}
\label{fig:realization}
\end{figure}

\emph{Quantum links.---}Quantum hardware meets the floor where it is steepest, because the devices that emit one photon at a time into a set helicity are the small-$L$ sources. A chiral interface writes an emitter's spin onto the photon's helicity~\cite{Lodahl2017,Sollner2015}; a polarization qubit rides the circular basis between frames that share no alignment~\cite{Laing2010}; a network of such nodes wants the photon collectable from any direction~\cite{Kimble2008}. Whatever the emitter, the helicity-selected amplitude its receiver reads is itself a section of the twisted bundle, so Eqs.~\eqref{eq:bound}--\eqref{eq:capdelta} apply to the selected channel with its own power as the normalization. For a photon-counting link the gain multiplies the collection efficiency, and every capacity of a lossy channel falls as its transmissivity falls~\cite{Pirandola2017}, so the coverage half carries over unchanged: a single-helicity emitter of order $L$ leaves a solid-angle fraction of at least $c_\eta/(HL^2)$ in which the collected rate drops below threshold, the order-one fraction of the previous paragraph for an interface of atomic scale. The information half does not transfer: a photon-starved rate is linear in the transmissivity, its angular average is fixed by power conservation, and the concavity penalty never comes due. For quantum links the entire price of one helicity is the blind patch, and no directional gain hides it.

\emph{One helicity or two.---}The bound belongs to a single resolved helicity, and impurity tests it in two distinct senses. A receiver that keeps analyzing the $\sigma$ component reads a section of the same twisted bundle at any impurity $\epsilon$, so Eq.~\eqref{eq:bound} holds for it with the selected power as the normalization. A receiver that detects both components responds to the total gain, and there a degree-$L$ evaluation bound, $4\pi|a_-|^2\le\epsilon d_L$, keeps the minority from filling a majority zero while $\epsilon d_L<\eta$; in the other direction, two opposite-helicity beams centered on the polar cores of the explicit family remove total outage once $\epsilon$ reaches $C_\eta\,\eta/d_L$, with $C_\eta$ independent of $L$, and near $3\eta/d_L$ in direct evaluation~\cite{SM}. The least power that removes total outage is therefore
\begin{equation}
 \frac{\eta}{d_L}\le\epsilon_c\le C_\eta\,\frac{\eta}{d_L},\qquad \epsilon_c=\Theta(\eta/d_L).
 \label{eq:crossover}
\end{equation}
This is a change of bundle rather than a repair of the gain. One helicity stays twisted and keeps its zero; the two together form a trivial bundle that may be dark nowhere. The obstruction, and its cost, is exactly what one pays to read a single helicity. The required impurity falls with the inverse mode count, our two sides of Eq.~\eqref{eq:crossover} differing only by an $L$-independent factor. Injecting a calibrated opposite-helicity fraction and locating the impurity at which the polar cores fill would test the scaling directly.

Topology forces a dark direction into every fixed-helicity far field, and finite angular resolution turns that point into a price paid jointly in coverage and in rate. The outage fraction and the coherent-rate deficit cannot both shrink faster than their product allows, $\fout\Dlt\ge c\,L^{-4}$, a law we prove from a Bernstein cap on the forced null, match in exponent with an explicit family of fields, and read off a wavelength-scale dielectric sphere. Reading both helicities lifts it at a power set by the inverse number of modes.

The topology plays a smaller part here than in the quantum-geometric bounds of condensed matter, where a Chern number multiplies the response it constrains~\cite{OnishiFu2024,OnishiFuSF2024,PeottaTorma2015}: the invariant supplies only the existence of the zero, bandwidth sets every scale, and no factor of the Chern number appears in Eq.~\eqref{eq:bound}. It is not settled whether the coverage optimum stays anticoherent beyond $L=1$, where our numerics show only that it stays flat, nor whether a Fisher-information version of Eq.~\eqref{eq:bound} holds for rotation sensing with single-helicity light, where the same constellations serve as probes. What Eq.~\eqref{eq:bound} adds to the classical limits of finite-mode radiation is the cost of fixing the polarization.

\paragraph*{Data availability.} The code and data that reproduce every figure and reported value are available at \url{https://github.com/hyoseokp/single-helicity-coverage-bound}. The raw full-wave field data underlying Table~S3, several gigabytes in size, are available from the author upon reasonable request.

\clearpage
\onecolumngrid
\setcounter{secnumdepth}{3}
\setcounter{equation}{0}
\setcounter{section}{0}
\setcounter{figure}{0}
\setcounter{table}{0}
\renewcommand{\theequation}{S\arabic{equation}}
\renewcommand{\thesection}{S\arabic{section}}
\renewcommand{\thefigure}{S\arabic{figure}}
\renewcommand{\thetable}{S\arabic{table}}

\begin{center}
{\large\bfseries Supplemental Material}
\end{center}
\vspace{4pt}

This material gives a complete proof of the product bound [Eq.~(4)] with explicit constants, and of
its robustness to imperfect band limitation (Sec.~\ref{sec:proof}); the optimization that maps the
achievable frontier and the grid resolution it requires (Sec.~\ref{sec:numerics}); an explicit family
of fields that attains the $L^{-4}$ exponent (Sec.~\ref{sec:family}); the derivation of the
opposite-helicity threshold [Eq.~(7)] (Sec.~\ref{sec:crossover}); and the self-dual sphere designs
with their full-wave check (Sec.~\ref{sec:mie}). Equation, table, and section numbers with the prefix S are internal; unprefixed
numbers refer to the Letter above.

\section{Proof of the product bound}
\label{sec:proof}

\subsection{Setup and the Bernstein slope bound}

Let $a$ be a helicity-$\sigma$ field of maximum order $L$, expanded as in Eq.~(1) with
$\sum_{\ell m}|c_{\ell m}|^2=1$; with the spin-weighted harmonics ${}_\sigma Y_{\ell m}$ orthonormal on
$S^2$ this is $\int_{S^2}|a|^2\,d\Omega=1$. The normalized gain is $g=4\pi|a|^2$, so that
$\avg{g}=(4\pi)^{-1}\int g\,d\Omega=1$; write $H=\norm{g}_\infty$, hence $\norm{a}_\infty=\sqrt{H/4\pi}$.
Because $a$ is a section of a line bundle of Chern number $-2\sigma$ it has at least one zero,
$a(\bm u_0)=0$~\cite{Palmerduca2024,Palmerduca2025}. We use the normalized area measure
$d\mu=d\Omega/4\pi$, for which a geodesic cap of angular radius $r$ has measure
$\mu_{\rm cap}(r)=\tfrac12(1-\cos r)$, and the elementary inequality
\begin{equation}
 \tfrac12(1-\cos r)\ \ge\ \frac{2}{\pi^2}\,r^2,\qquad 0\le r\le\tfrac{\pi}{2},
 \label{eq:capineq}
\end{equation}
which follows because $(1-\cos r)/r^2$ decreases on $[0,\pi/2]$ from $\tfrac12$ to $4/\pi^2$.

\begin{lemma}[Bernstein inequality for spin-weighted fields]\label{lem:bern}
For $a$ as in Eq.~(1), the covariant tangential gradient obeys
\begin{equation}
 \norm{\nabla a}_\infty\ \le\ L\,\norm{a}_\infty .
 \label{eq:bernstein}
\end{equation}
\end{lemma}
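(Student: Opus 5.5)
The plan is to lift the spin-weighted field to a scalar function on the rotation group. There each covariant directional derivative becomes an ordinary derivative along a one-parameter subgroup. Along such a subgroup the field is a trigonometric polynomial of degree at most $L$, so the classical one-variable Bernstein inequality $\|f'\|_\infty\le L\|f\|_\infty$ gives the claim.

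\emph{Step 1 (lift).} For $R\in SO(3)$ define $F(R)=a(R\hat{\bm z})$, evaluated in the frame $(R\hat{\bm x},R\hat{\bm y})$, that is, $F(R)=a(R\hat{\bm z})\,e^{i\sigma\psi}$ in Euler angles. With the standard identification ${}_\sigma Y_{\ell m}\propto \overline{D^{\ell}_{m,-\sigma}}$, $F$ is a finite combination of Wigner matrix entries $D^\ell_{m,-\sigma}$ with $1\le \ell\le L$. Since $|F(R)|=|a(R\hat{\bm z})|$, we have $\|F\|_\infty=\|a\|_\infty$.

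\emph{Step 2 (geodesics are one-parameter subgroups).} Fix a point $\bm u=R\hat{\bm z}$ and a unit tangent vector, which we may take as $R\hat{\bm x}$ after rotating the frame about $\hat{\bm z}$; this only multiplies $F$ by a unimodular phase. The curve $t\mapsto R\,e^{tJ_y}\hat{\bm z}$ is the great circle through $\bm u$ in that direction. The moving frame $R e^{tJ_y}(\hat{\bm x},\hat{\bm y})$ is exactly the parallel-transported frame along this geodesic, because rotation about the axis normal to a great circle transports tangent vectors parallelly. Hence $\frac{d}{dt}F(Re^{tJ_y})\big|_{t=0}$ equals the covariant derivative $\nabla_{\bm e}a(\bm u)$, and the spin connection is absorbed without any extra phase term.

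\emph{Step 3 (Bernstein in one variable).} The function $f(t)=F(Re^{tJ_y})=\sum_{\ell\le L}\sum_{m} c'_{\ell m}\,D^\ell_{m,-\sigma}(Re^{tJ_y})$ is a trigonometric polynomial in $t$. The generator $J_y$ has integer eigenvalues in $[-\ell,\ell]$ on the spin-$\ell$ representation, and $\sigma=\pm1$ makes every $\ell$ an integer, so $f$ has period $2\pi$ and degree at most $L$. For complex $f$ and any phase $\varphi$, the real trigonometric polynomial $\mathrm{Re}(e^{-i\varphi}f)$ also has degree at most $L$. Choosing $\varphi$ to align with $f'(0)$, Bernstein's inequality gives
\[
|f'(0)|=\bigl(\mathrm{Re}(e^{-i\varphi}f)\bigr)'(0)\le L\,\|f\|_\infty\le L\,\|a\|_\infty .
\]
Taking the supremum over points and unit directions yields Eq.~\eqref{eq:bernstein}.

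\emph{Expected obstacle.} The main difficulty is the meaning of $|\nabla a|$ for a complex field. The argument above bounds every directional covariant derivative, so it bounds the operator norm of $\nabla a$. If $|\nabla a|$ is instead the Hilbert--Schmidt norm $(|\nabla_1a|^2+|\nabla_2a|^2)^{1/2}$, the argument only gives $\sqrt2\,L\|a\|_\infty$.

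I would therefore fix the directional (operator-norm) reading, for two reasons. First, it is all that the cap estimates \eqref{eq:capalpha}--\eqref{eq:capdelta} use: they integrate the slope of $|a|$ along geodesics leaving the zero or the peak, and $\bigl|\tfrac{d}{dt}|a|\bigr|\le|\nabla_{\bm e}a|$. Second, any factor of $\sqrt2$ would only rescale $c_\eta$ and $c_\rho$.

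A secondary check is the phase convention relating ${}_\sigma Y_{\ell m}$ to the $D$-functions, in particular the sign of $\sigma$ in the index $-\sigma$. Both signs give the same degree bound, so the conclusion does not depend on this choice.
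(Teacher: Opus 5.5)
Your proposal is correct and is essentially the paper's proof. A great circle is the orbit of a one-parameter rotation subgroup that parallel-transports the helicity frame; through the Wigner matrices the restriction of $a$ is then a trigonometric polynomial of degree at most $L$, and the circle Bernstein inequality finishes. Your right translation $Re^{tJ_y}$ on $SO(3)$ is the paper's rotation about the great circle's axis $R\hat{\bm y}$, only written on the lift.

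Your directional (operator-norm) reading of $\norm{\nabla a}_\infty$ is the same reading the paper uses (``maximizing over directions''), and it is necessary, not just convenient. The anticoherent $L=1$ field is locally proportional to $x\pm iy$ at its zeros. There every directional derivative has modulus $\norm{a}_\infty$, while the Hilbert--Schmidt norm of the gradient is $\sqrt2\,\norm{a}_\infty$, so the Hilbert--Schmidt version of the lemma is false.
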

\begin{proof}
A unit-speed great circle $\gamma$ is the orbit of the one-parameter rotation subgroup $R_t$ about its
axis, and the same rotations parallel-transport the helicity frame along $\gamma$. Rotations act on
helicity-$\sigma$ fields unitarily and multiplet by multiplet, through the Wigner matrices
$D^\ell(R_t)$, whose entries are trigonometric polynomials of degree $\ell$ in $t$, because the
generator of $R_t$ has eigenvalues $-\ell,\dots,\ell$ in the spin-$\ell$ representation. In the
transported frame the restriction $t\mapsto a(\gamma(t))$ is therefore a trigonometric polynomial of
degree at most $L$, and the Bernstein inequality on the circle bounds its derivative by
$L\,\sup_t|a(\gamma(t))|\le L\norm{a}_\infty$. The covariant derivative of $a$ along $\gamma$ is
exactly this $t$ derivative, and every tangent direction is tangent to a great circle; maximizing over
directions gives \eqref{eq:bernstein}. See Refs.~\cite{Arestov1982,Backus1979} for the scalar case;
the constant here is exactly $L$, with no correction from the spin weight.
\end{proof}

Since $|a|$ is frame independent, integrating \eqref{eq:bernstein} along the minimizing geodesic from
any base point $\bm u_b$ gives, for every $\bm u$ at geodesic distance $\theta$ from $\bm u_b$,
\begin{equation}
 \bigl|\,|a(\bm u)|-|a(\bm u_b)|\,\bigr|\ \le\ L\,\norm{a}_\infty\,\theta .
 \label{eq:lip}
\end{equation}

\subsection{Coverage cap}

\begin{lemma}\label{lem:alpha}
$\displaystyle \fout\ \ge\ \frac{2\eta}{\pi^2}\,\frac{1}{H L^2}.$
\end{lemma}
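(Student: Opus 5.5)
We work from the forced zero $\bm u_0$, where $a(\bm u_0)=0$, and apply the Lipschitz estimate \eqref{eq:lip} with base point $\bm u_b=\bm u_0$. This gives $|a(\bm u)|\le L\norm{a}_\infty\,\theta$ for every $\bm u$ at geodesic distance $\theta$ from $\bm u_0$. Since $\norm{a}_\infty=\sqrt{H/4\pi}$, squaring yields
\begin{equation*}
 g(\bm u)=4\pi|a(\bm u)|^2\ \le\ H L^2\theta^2 .
\end{equation*}
Hence every direction in the geodesic cap of radius $r$ about $\bm u_0$ with $HL^2 r^2\le\eta$ has $g\le\eta$. The whole cap therefore lies in the sub-threshold set, and $\fout\ge\mu_{\rm cap}(r)$.

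We take the largest admissible radius, $r_*=\sqrt{\eta/(HL^2)}$. To use the elementary inequality \eqref{eq:capineq} we need $r_*\le\pi/2$. Because $\avg{g}=1$ we have $H\ge1$, and by hypothesis $\eta<1$ and $L\ge1$, so $r_*<1<\pi/2$.

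Then \eqref{eq:capineq} gives
\begin{equation*}
 \fout\ \ge\ \mu_{\rm cap}(r_*)\ \ge\ \frac{2}{\pi^2}r_*^2\ =\ \frac{2\eta}{\pi^2}\,\frac{1}{HL^2},
\end{equation*}
which is the claim.

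The only subtle step is justifying \eqref{eq:lip} for $|a|$ rather than for $a$. Since $|a|$ is frame independent, and $\bigl|\nabla|a|\bigr|\le|\nabla a|$ wherever $a\neq0$ (by Kato's inequality, using the covariant derivative), Lemma~\ref{lem:bern} controls $|a|$ along the minimizing geodesic. The argument is short, and the main point of care is simply to confirm that the cap radius stays in the range where \eqref{eq:capineq} holds.
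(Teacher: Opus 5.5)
Your proposal is correct and follows the paper's proof essentially line for line: you base the Lipschitz estimate \eqref{eq:lip} at the forced zero, deduce $g\le HL^2\theta^2$, take the cap of radius $\sqrt{\eta/H}/L$, check that this radius is at most $\pi/2$ using $\eta<1\le H$, and apply \eqref{eq:capineq}. Your Kato-inequality remark justifies the step from Lemma~\ref{lem:bern} to \eqref{eq:lip} a little more explicitly than the paper, which only notes that $|a|$ is frame independent, and it is fine; the triangle inequality for $a$ in the parallel-transported frame along the geodesic would do the same job without any care at the points where $a$ vanishes.
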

\begin{proof}
Take $\bm u_b=\bm u_0$ in \eqref{eq:lip}. Since $a(\bm u_0)=0$,
$|a(\bm u)|\le L\sqrt{H/4\pi}\,\theta$, so $g(\bm u)=4\pi|a(\bm u)|^2\le L^2H\theta^2$. Thus $g\le\eta$
throughout the cap $\theta\le r_\alpha\equiv\sqrt{\eta/H}\,/L$, which for $\eta<1\le H$ satisfies
$r_\alpha\le\pi/2$. By \eqref{eq:capineq},
$\fout\ge\mu_{\rm cap}(r_\alpha)\ge (2/\pi^2)\,r_\alpha^2=(2\eta/\pi^2)/(HL^2)$.
\end{proof}

\subsection{Rate deficit}

The deficit is a nonnegative Jensen residual,
\begin{equation}
 \Dlt=\avg{\Phi_\rho(g)},\qquad
 \Phi_\rho(x)=\log_2(1{+}\rho)+\frac{\rho(x-1)}{(1+\rho)\ln2}-\log_2(1+\rho x),
 \label{eq:phi}
\end{equation}
the gap between $\log_2(1+\rho x)$ and its tangent at $x=1$. Since $\log_2(1+\rho x)$ is strictly
concave, $\Phi_\rho\ge0$, with $\Phi_\rho(1)=0$; and
$\Phi_\rho'(x)=\frac{\rho}{\ln2}\!\left[\frac{1}{1+\rho}-\frac{1}{1+\rho x}\right]$ is negative for
$x<1$ and positive for $x>1$, so $\Phi_\rho$ decreases on $[0,1]$ and increases on $[1,\infty)$.

\begin{lemma}\label{lem:delta}
There are constants $H_0(\rho)\ge4$ and $\kappa_\rho>0$ (at $\rho=10$, $H_0\approx23$ and $\kappa_\rho\approx0.16$) such that
$\Dlt\ge \dfrac{\kappa_\rho}{2\pi^2}\,\dfrac{H}{L^2}$ whenever $H\ge H_0$, and
$\Dlt\ge \dfrac{\Phi_\rho(1/4)}{2\pi^2 H_0}\,\dfrac{1}{L^2}$ whenever $H<H_0$.
\end{lemma}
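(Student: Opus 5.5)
The plan is to apply the Lipschitz estimate \eqref{eq:lip} at a maximum of $g$ rather than at the zero. This gives a cap on which $g$ stays large, and the convexity properties of $\Phi_\rho$ then turn that cap into a lower bound on $\Dlt=\avg{\Phi_\rho(g)}$. The argument splits into two cases: for large $H$ the peak cap itself costs rate proportional to $H/L^2$; for small $H$ the forced zero supplies a fixed deficit over an area of order $1/(HL^2)$.

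\emph{Case $H\ge H_0$.} Let $\bm u_b$ be a point with $g(\bm u_b)=H$, so that $|a(\bm u_b)|=\norm{a}_\infty=\sqrt{H/4\pi}$. By \eqref{eq:lip}, $|a(\bm u)|\ge \sqrt{H/4\pi}\,(1-L\theta)$. Throughout the cap $\theta\le r_H\equiv 1/(2L)$ this gives $g\ge H/4$. Since $r_H\le\pi/2$, Eq.~\eqref{eq:capineq} bounds the cap measure from below by $(2/\pi^2)(1/4L^2)=1/(2\pi^2L^2)$. We have $H/4\ge1$ because $H_0\ge4$, and $\Phi_\rho$ is nonnegative and increasing on $[1,\infty)$, so $\Dlt\ge \Phi_\rho(H/4)/(2\pi^2L^2)$. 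It remains to show $\Phi_\rho(H/4)\ge\kappa_\rho H$ for $H\ge H_0$. The function $\Phi_\rho(x)$ grows asymptotically like $\rho x/((1+\rho)\ln2)$ minus a logarithm, so $\Phi_\rho(H/4)/H$ tends to $\rho/(4(1+\rho)\ln2)\approx0.33$ at $\rho=10$. Because $\Phi_\rho$ is convex with $\Phi_\rho(1)=0$, the ratio $\Phi_\rho(x)/(x-1)$ is nondecreasing, and so is $\Phi_\rho(x)/x$ for $x\ge1$. Setting $\kappa_\rho\equiv\Phi_\rho(H_0/4)/H_0$ therefore makes the bound hold for all $H\ge H_0$. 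We then choose $H_0$ (numerically, about $23$ at $\rho=10$) so that this $\kappa_\rho$ is a definite fraction of the asymptotic slope, which gives $\kappa_\rho\approx0.16$. The only care needed is in checking this monotonicity, and it follows from convexity.

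\emph{Case $H<H_0$.} Here the zero is used. As in Lemma~\ref{lem:alpha} with threshold $1/4$ in place of $\eta$, we have $g\le1/4$ on the cap $\theta\le \sqrt{1/(4H)}/L$, whose measure is at least $(2/\pi^2)/(4HL^2)=1/(2\pi^2HL^2)$. On this cap $\Phi_\rho(g)\ge\Phi_\rho(1/4)$, because $\Phi_\rho$ decreases on $[0,1]$. Dropping the nonnegative remainder gives $\Dlt\ge \Phi_\rho(1/4)/(2\pi^2 H L^2)>\Phi_\rho(1/4)/(2\pi^2H_0L^2)$.

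The main obstacle is the first case: pinning down explicit $H_0$ and $\kappa_\rho$ so that the linear-in-$H$ growth survives the logarithmic correction in $\Phi_\rho$. This rests on the convexity and monotonicity of $\Phi_\rho(x)/x$ and a one-dimensional numerical evaluation at $\rho=10$. Everything else is a direct reuse of \eqref{eq:lip} and \eqref{eq:capineq}.
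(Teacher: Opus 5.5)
Your proof is correct and follows the paper's argument. In the large-peak case you take a cap of radius $1/(2L)$ around the maximum, where $g\ge H/4$. In the bounded-peak case you take a cap around the forced zero, where $g\le 1/4$ and hence $\Phi_\rho(g)\ge\Phi_\rho(1/4)$; the paper uses radius $1/(2L\sqrt{H_0})$ directly, while you use $1/(2L\sqrt{H})$ and then $H<H_0$, which is equivalent.

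Your convexity step, $\Phi_\rho(x)/x=\frac{\Phi_\rho(x)}{x-1}\cdot\frac{x-1}{x}$ being nondecreasing on $[1,\infty)$, is a small improvement. It justifies taking $H_0$ as the smallest $H$ with $\Phi_\rho(H/4)\ge\kappa_\rho H$, which the paper supports only by the asymptotic dominance of the linear term.
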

\begin{proof}
\emph{Large peak, $H\ge H_0$.} Let $\bm u_\ast$ be a maximum, $|a(\bm u_\ast)|=\sqrt{H/4\pi}$. By
\eqref{eq:lip} with $\bm u_b=\bm u_\ast$, $|a(\bm u)|\ge\sqrt{H/4\pi}\,(1-L\theta)\ge\tfrac12\sqrt{H/4\pi}$
for $\theta\le r_\beta\equiv1/(2L)$, hence $g\ge H/4$ on that cap, whose measure is at least
$(2/\pi^2)r_\beta^2=1/(2\pi^2L^2)$ by \eqref{eq:capineq}. Restricting the average \eqref{eq:phi} to the
cap and using monotonicity, $\Dlt\ge (2\pi^2L^2)^{-1}\Phi_\rho(H/4)$. For large argument
$\Phi_\rho(x)=\frac{\rho}{(1+\rho)\ln2}\,x-\log_2(\rho x)+O(1)$, so the linear term dominates: there is
$H_0\ge4$ with $\Phi_\rho(H/4)\ge\kappa_\rho H$ for $H\ge H_0$, where
$\kappa_\rho=\rho/[8(1+\rho)\ln2]$. This gives the first bound.

\emph{Bounded peak, $H<H_0$.} Reusing $g\le L^2H\theta^2\le L^2H_0\theta^2$ near $\bm u_0$, we have
$g\le1/4$ on the cap $\theta\le1/(2L\sqrt{H_0})$, of measure at least
$1/(2\pi^2H_0L^2)$. On $[0,1/4]$ the residual is bounded below by its value at the right endpoint,
$\Phi_\rho(g)\ge\Phi_\rho(1/4)>0$, so $\Dlt\ge \Phi_\rho(1/4)/(2\pi^2H_0L^2)$.
\end{proof}

\subsection{Product}

Multiplying Lemmas~\ref{lem:alpha} and \ref{lem:delta} cancels the peak gain. For $H\ge H_0$,
\begin{equation}
 \fout\,\Dlt\ \ge\ \frac{2\eta}{\pi^2HL^2}\cdot\frac{\kappa_\rho H}{2\pi^2L^2}
 \ =\ \frac{\eta\,\kappa_\rho}{\pi^4}\,L^{-4}.
 \label{eq:prod1}
\end{equation}
For $H<H_0$, Lemma~\ref{lem:alpha} gives $\fout\ge(2\eta/\pi^2)/(H_0L^2)$ and Lemma~\ref{lem:delta} gives
$\Dlt\ge\Phi_\rho(1/4)/(2\pi^2H_0L^2)$, so
\begin{equation}
 \fout\,\Dlt\ \ge\ \frac{\eta\,\Phi_\rho(1/4)}{\pi^4H_0^{\,2}}\,L^{-4}.
 \label{eq:prod2}
\end{equation}
Taking the smaller of \eqref{eq:prod1} and \eqref{eq:prod2} proves Eq.~(4) with
\begin{equation}
 c_{\eta,\rho}=\frac{\eta}{\pi^4}\,\min\!\Big\{\kappa_\rho,\ \Phi_\rho(1/4)/H_0^{\,2}\Big\}>0.
\end{equation}
At $\eta=1/2$, $\rho=10$ the numbers are $\kappa_\rho=0.164$, $\Phi_\rho(1/4)=0.668$, and $H_0\approx23$
(the smallest $H$ with $\Phi_\rho(H/4)\ge\kappa_\rho H$), giving $c_{\eta,\rho}\approx7\times10^{-6}$.
The compensated product achieved in Table~\ref{tab:opt} is of order $10^{-1}$, and the explicit family
of Sec.~\ref{sec:family} caps the minimum from above at the same power of $L$: the exponent of Eq.~(4)
is exact, while the constant, assembled from worst-case cap estimates, is conservative by four orders
of magnitude and is not claimed optimal.
The estimate uses $a$ only through $|a|$, so every step is independent of the local helicity frame, and
topology enters once, to place the zero of Lemma~\ref{lem:alpha}. The same argument applied to a zero of
order $m$ replaces $\eta$ by $\eta^{1/m}$ in Lemma~\ref{lem:alpha}; since the generic forced zero is
simple this only improves the constant, and it never introduces a factor of the Chern number, because the
Poincar\'e--Hopf index is carried by a single coalesced zero as often as by separated ones.

\subsection{Robustness to imperfect band limitation}
\label{sec:tail}

A physical source is never strictly band limited, so we record how Eq.~(4) degrades under a small
high-order tail. Write $\tilde a=a+a_t$, with $a$ the modes of $\ell\le L$ and $a_t$ the rest, the total
field power normalized, $\tilde g=4\pi|\tilde a|^2$, $\tilde H=\norm{\tilde g}_\infty\ge1$, and suppose
the tail is uniformly small,
\begin{equation}
 \tau\ \equiv\ 4\pi\norm{a_t}_\infty^2\ \le\ \min\!\big(\eta/16,\ 1/64\big).
 \label{eq:tailsize}
\end{equation}
The topological zero belongs to the full field, $\tilde a(\bm u_0)=0$, whatever its bandwidth, so
$|a(\bm u_0)|\le\norm{a_t}_\infty$, and \eqref{eq:lip} applied to the band-limited part gives
$\sqrt{\tilde g(\bm u)}\le2\sqrt\tau+2L\sqrt{\tilde H}\,\theta$ at distance $\theta$ from the zero. With
\eqref{eq:tailsize} the sub-threshold radius of Lemma~\ref{lem:alpha} shrinks by at most a factor of
$4$, so the coverage cap survives with $\eta$ replaced by $\eta/16$. At a maximum the same displacement
argument gives $\tilde g\ge\tilde H/4$ on a cap of radius $1/(8L)$ rather than $1/(2L)$, and in the
bounded-peak branch $\tilde g\le\tfrac14$ on a cap smaller by the same factor of $4$ in radius. Each of
the three cap measures is reduced by at most $16$, and the product bound follows as in
Eqs.~\eqref{eq:prod1}--\eqref{eq:prod2}:
\begin{equation}
 \fout(\tilde a)\,\Dlt(\tilde a)\ \ge\ \frac{c_{\eta,\rho}}{256}\,L^{-4},
 \label{eq:tailbound}
\end{equation}
with the same exponent and the constant of Eq.~(4) degraded by the stated factor. The uniform tail of a
real source is small in precisely this sense: for a current of electrical size $ka$ the multipole
amplitudes fall off faster than exponentially beyond $\ell\sim ka$, so \eqref{eq:tailsize} is met once
$L$ exceeds $ka$ by a few units. The hard cutoff in Eq.~(1) is a normalization of the mode count, not a
physical assumption.

\section{Optimization and its convergence}
\label{sec:numerics}

\subsection{Method}
Both $\fout$ and $\Dlt$ are invariant under rigid rotation, so they are functions of the
$SU(2)$-invariant shape of $\{c_{\ell m}\}$, and the achievable region is low dimensional at small $L$.
We minimize $\fout$, $\Dlt$, and the scalarizations $\fout+\lambda\Dlt$ over the unit sphere
$\sum|c_{\ell m}|^2=1$. The deficit is smooth with an analytic gradient. The outage is not, so we minimize
a logistic surrogate $\avg{[1+e^{(g-\eta)/\tau}]^{-1}}$ and anneal $\tau\downarrow0$, then polish by
coordinate descent on the exact fraction. Each run starts from a random point on the sphere or from a
structured seed (a coherent spike, the anticoherent $\ell{=}1$ state, or a superposition of antipodal
spikes); we keep the best of several dozen starts per $L$.

\subsection{Resolution}
The outage is a set measure, so the evaluation grid must resolve the sub-threshold cap. A field with peak
gain $H$ has, by Lemma~\ref{lem:alpha}, a cap of angular radius $\gtrsim\sqrt{\eta/H}/L$; a grid coarser
than this reports a spuriously small outage, and a minimizer run against a coarse grid exploits the gap by
hiding the cap between nodes. We therefore evaluate every reported field on a Gauss--Legendre grid in
$\cos\theta$ with $\ge 80L$ nodes and a uniform $\phi$ grid with $\ge 16L$ nodes, and confirm each value
against a doubled grid. Table~\ref{tab:opt} lists the converged minima at $\eta=1/2$, $\rho=10$. These are
\emph{achievable} (upper-bound) frontiers: every entry is an explicit field, so the bound side is
automatic, but for $L\ge2$ global optimality is not certified. The exception is $L=1$, treated next, where
the coefficient space is small enough to certify the optimum outright.

\begin{table}[h]
\centering\small
\begin{tabular}{@{}rccccc@{}}
\toprule
$L$ & $\fout^{\min}$ & $\fout^{\min}L^2$ & $\Dlt^{\min}$ & $H$ at min & $\fout\Dlt\,L^4$\\
\midrule
2  & $5.2\times10^{-2}$ & 0.21 & $7.5\times10^{-2}$ & 1.5 & 0.08\\
3  & $2.0\times10^{-2}$ & 0.18 & $4.5\times10^{-2}$ & 2.1 & 0.11\\
4  & $1.1\times10^{-2}$ & 0.18 & $3.3\times10^{-2}$ & 2.0 & 0.18\\
6  & $3.2\times10^{-3}$ & 0.12 & $2.0\times10^{-2}$ & 3.4 & 0.16\\
8  & $3.2\times10^{-3}$ & 0.21 & $1.2\times10^{-2}$ & 3.9 & 0.19\\
12 & $1.2\times10^{-3}$ & 0.18 & $6.0\times10^{-3}$ & 4.4 & 0.22\\
16 & $6.0\times10^{-4}$ & 0.15 & $4.0\times10^{-3}$ & 5.1 & 0.22\\
\bottomrule
\end{tabular}
\caption{Optimal single-helicity frontier at $\eta=1/2$, $\rho=10$, from converged multi-start
optimization. Fitted exponents over $L=4$--$16$ are $-1.97$ ($\fout$) and $-1.96$ ($\Dlt$); the
compensated product $\fout\Dlt L^4$ stays between $0.08$ and $0.22$.}
\label{tab:opt}
\end{table}

The minimizer of the outage is not superdirective: its peak gain rises only from $1.5$ at $L=2$ to $5.1$
at $L=16$, far below the supergain scale $L^2$. Consistent with Lemma~\ref{lem:delta}, emptying the
sub-threshold set is done by flattening the pattern and leaving only the unavoidable null, not by spiking
it; a field driven superdirective in an attempt to shrink the cap instead sends most of the sphere below
threshold and raises the outage.

\subsection{The $L=1$ case}
At $L=1$ the coefficient space is a single spin-1 multiplet, and up to rigid rotation and phase every
shape belongs to a one-parameter family labeled by the separation of its two Majorana stars. This
reduction is exact, so the optimum over it can be certified rather than merely searched for. A canonical
representative of the family is
\begin{equation}
 |\psi(t)\rangle=\cos t\,|1,1\rangle+\sin t\,|1,{-1}\rangle,\qquad t\in[0,\pi/4],
 \label{eq:l1family}
\end{equation}
with $t=0$ the coherent (dual-dipole) state and $t=\pi/4$ the anticoherent state~\cite{Zimba2006}; every
$L=1$ field is a rotation of some $|\psi(t)\rangle$. With $c=\cos\theta$, $A=\cos t\,(1-c)/2$, and
$B=\sin t\,(1+c)/2$, the gain is $g=3\big[A^2+B^2+2AB\cos2\phi\big]$, and the $\phi$ integral in both
$\fout(t)$ and $\Dlt(t)$ is elementary, leaving a single quadrature in $c$.

The slice makes the certificate elementary, because the $t$ derivative of the gain has a closed form,
$\partial_t g=3\big[c\sin2t+\tfrac12(1-c^2)\cos2t\cos2\phi\big]$, bounded pointwise by
$|\partial_t g|\le3$. A scan of $20001$ points, spacing $h=3.9\times10^{-5}$, shows both observables
decreasing monotonically to $t=\pi/4$ and certifies the minimum with analytic constants. For the
outage: $g_t\le g_{t_i}+3h/2$ pointwise at the nearest scan point, so the sublevel set of $g_t$ at
$\eta$ contains that of $g_{t_i}$ at $\eta-3h/2$, and evaluating the scan at the shifted threshold
proves $\fout\ge0.18348$ for every $L=1$ field, against the anticoherent value $0.18350$: the global
minimizer to within $2.4\times10^{-5}$. For the deficit:
$|\partial_t\Dlt|\le(\rho/\ln2)\avg{|\partial_t g|}\le24$ at $\rho=10$, which proves
$\Dlt\ge0.19528$ for every field against the anticoherent $0.19574$, global to within
$5\times10^{-4}$; the margin is set by the deliberately crude derivative constant, the observed slope
never exceeding $0.97$. An independent brute-force evaluation on random spin-1 states, located on the
slice through their own Majorana stars, agrees to within $10^{-5}$. The optimum is stronger than a
minimum at one threshold and one signal-to-noise ratio. The gain of every other member of the family majorizes the
anticoherent gain (its Lorenz curve lies on or below all others, to within numerical noise, over the full
range of $t$), so the anticoherent state minimizes the average of every convex function of the gain at
once; the deficit is such an average [Eq.~\eqref{eq:phi}], so $\Dlt$ is minimized for every $\rho$. The
outage is not a convex functional and we verify it directly: scans at $\eta=0.1$ to $0.9$ all return the
anticoherent state as the minimizer.

The anticoherent multiplet, the spin-1 state with $\avg{\bm J}=0$, splits the topological index~$2$ of the
forced null into two simple zeros at antipodal poles, rather than the single double zero of the coherent
state, and in the frame aligned with its own symmetry axis its gain is $g=\tfrac32\sin^2\theta$. At
$\eta=1/2$ the observables are closed form,
\begin{equation}
 \fout=1-\sqrt{2/3}\simeq0.1835,\qquad
 \Dlt=\log_2(1+\rho)+\frac{2}{\ln 2}\Big[1-\frac{\mathrm{artanh}\sqrt a}{\sqrt a}\Big]
 \simeq0.195,
\end{equation}
with $a=\tfrac32\rho/(1+\tfrac32\rho)=15/16$ at $\rho=10$; the corresponding dual-dipole values are
$0.408$ and $0.658$. These are the numbers quoted for the certified $ka\approx1$ optimum in the Letter.

\section{An explicit family attaining the exponent}
\label{sec:family}

The optimization of Sec.~\ref{sec:numerics} finds fields whose product follows $L^{-4}$, but a search
cannot by itself pin an exponent. This section removes the dependence on the search: an explicit,
closed-form family of admissible fields obeys $\fout\,\Dlt\le C_{\eta,\rho}L^{-4}$, so together with
Eq.~(4) the smallest product over degree-$L$ fields is $\Theta(L^{-4})$ and the exponent four is exact.

\subsection{Construction}

Let $J_n(u)=\gamma_n\big[\sin(nu/2)/\sin(u/2)\big]^4$ be the Jackson kernel on the circle,
$\gamma_n=3/(2n^3+n)$ fixing $(2\pi)^{-1}\!\int_{-\pi}^{\pi}J_n=1$, and let
$q(\theta)=\mathrm{sgn}(\sin\theta)$ be the square wave. For maximum order $L\ge1$ set
$n=\lfloor(L+3)/2\rfloor$ and take the smoothed profile
\begin{equation}
 b_L\ =\ q\ast J_n,
 \label{eq:family}
\end{equation}
the circular convolution. Because $\hat q$ is supported on odd harmonics and $\hat J_n$ on
$|k|\le2n-2$, the profile $b_L$ is an odd sine polynomial of degree at most $2n-3\le L$; through
$\sin(o\theta)=\sin\theta\;U_{o-1}(\cos\theta)$, with $U_{o-1}$ the Chebyshev polynomial, it lies in
$\sin\theta\cdot\{\text{polynomials of degree}\le L-1\text{ in }\cos\theta\}$, which is exactly the
span of the axisymmetric spin-weighted harmonics ${}_\sigma Y_{\ell0}\propto\sin\theta\,
P_\ell'(\cos\theta)$ with $1\le\ell\le L$. So $b_L$ is an admissible field of Eq.~(1). It vanishes at
the two poles, where the topological index sits as two antipodal zeros, and at $L=1$ it reduces to
$b_1\propto\sin\theta$: the family begins at the anticoherent optimum of Sec.~\ref{sec:numerics} and
extends it to every order.

\subsection{Accuracy, outage, and deficit}

Write $d(\theta)=\min(\theta,\pi-\theta)$ for the distance to the nearest zero. Since
$\sin(u/2)\ge u/\pi$ on $(0,\pi]$, the kernel obeys $J_n(u)\le(3\pi^4/2)\,n^{-3}u^{-4}$, and
integrating the envelope gives the mass bound
$(2\pi)^{-1}\!\int_{|u|\ge d}J_n\le(\pi^3/2)(nd)^{-3}$. The convolution error is twice the kernel mass
on the arc $(\theta,\theta+\pi)$, whose distance from the origin is $d(\theta)$, so
\begin{equation}
 0\ \le\ 1-b_L(\theta)\ \le\ \min\!\big\{2,\ \pi^3(n\,d(\theta))^{-3}\big\},\qquad b_L\le1 .
 \label{eq:famacc}
\end{equation}
Integrating \eqref{eq:famacc} over the sphere, $\avg{1-b_L}\le2\pi^2/n^2$, hence the mean
$M=\avg{b_L^2}\ge1-4\pi^2/n^2$. Three consequences follow for the normalized gain $g=b_L^2/M$.

\emph{Outage.} If $d(\theta)>d_*\equiv(\pi/n)(1-\sqrt\eta)^{-1/3}$ then
$1-b_L<1-\sqrt\eta$, so $g>\eta$; the sub-threshold set is contained in the two polar caps of radius
$d_*$, and
\begin{equation}
 \fout\ \le\ 1-\cos d_*\ \le\ \frac{\pi^2}{2}\,(1-\sqrt\eta)^{-2/3}\,n^{-2}.
 \label{eq:famalpha}
\end{equation}

\emph{Deficit.} The pointwise bound
$\log_2\!\big[(1{+}\rho)/(1{+}\rho x)\big]\le\rho\,(1-x)_+/\ln2$ and
$1-g\le(1-b_L^2)/M\le2(1-b_L)/M$ give
\begin{equation}
 \Dlt\ \le\ \frac{\rho}{\ln2}\,\avg{(1-g)_+}\ \le\ \frac{4\pi^2\rho}{M\ln2}\,n^{-2}.
 \label{eq:famdelta}
\end{equation}

\emph{Product.} With $n\ge(L+2)/2$, multiplying \eqref{eq:famalpha} and \eqref{eq:famdelta} yields,
for every $L\ge15$ (so that $M\ge\tfrac12$),
\begin{equation}
 \fout\,\Dlt\ \le\ \frac{32\pi^4\rho\,(1-\sqrt\eta)^{-2/3}}{M\ln2}\,L^{-4}
 \ \le\ 2\times10^{5}\,L^{-4}\quad(\eta=\tfrac12,\ \rho=10).
 \label{eq:fambound}
\end{equation}
The constants are conservative by design; the point of \eqref{eq:fambound} is the power. Together with
Eq.~(4), $c_{\eta,\rho}L^{-4}\le\min\fout\Dlt\le C_{\eta,\rho}L^{-4}$: the exponent is exact, for
every threshold and signal-to-noise ratio, by construction rather than by search.

\begin{table}[h]
\centering\small
\begin{tabular}{@{}rcccc@{}}
\toprule
$L$ & $\fout$ & $\Dlt$ & $\fout\Dlt\,L^4$ & $H$\\
\midrule
1  & 0.1835   & 0.1957   & 0.036 & 1.50\\
4  & 0.1260   & 0.1260   & 4.1   & 1.30\\
8  & 0.0541   & 0.0494   & 10.9  & 1.09\\
16 & 0.0180   & 0.0158   & 18.6  & 1.03\\
32 & 0.00516  & 0.00447  & 24.2  & 1.008\\
64 & 0.00138  & 0.00119  & 27.6  & 1.002\\
96 & 0.000627 & 0.000540 & 28.8  & 1.001\\
\bottomrule
\end{tabular}
\caption{The attaining family \eqref{eq:family} at $\eta=1/2$, $\rho=10$. The scaled observables settle
to $\fout\,n^2\approx1.50$ and $\Dlt\,n^2\approx1.30$ with $n=\lfloor(L+3)/2\rfloor$, moving by under
one percent from $L=32$ on, so the compensated product approaches $16\times1.95\approx31$: far above
the optimized frontier of Table~\ref{tab:opt}, but the same power of $L$, with no logarithmic drift.
The peak gain $H$ falls to $1$: the family attains the exponent while becoming asymptotically
isotropic.}
\label{tab:family}
\end{table}

Table~\ref{tab:family} lists the family's observables. Two remarks. First, the family is about a
factor of $10^2$ above the optimized frontier in the product, so it settles the exponent but not the
prefactor; the frontier fields of Table~\ref{tab:opt} remain the best known. Second, the profile
realizes the structure the optimizer discovers on its own: two well-separated simple zeros and a gain
that is flat everywhere else.

\section{Opposite-helicity threshold}
\label{sec:crossover}

Write the field with both helicities, powers $1-\epsilon$ in the wanted helicity $\sigma$ and
$\epsilon$ in the minority $-\sigma$. Two receivers must be kept apart. One keeps analyzing the
$\sigma$ component alone: what it reads is the majority section, a section of the same twisted bundle
at every $\epsilon$, so its outage and deficit, normalized by the selected power, obey Eq.~(4) no
matter how impure the field is. The other detects both helicities; the two circular components are
orthogonal polarizations in every direction, so their powers add with no interference term,
$g_{\rm tot}=g_{\sigma}+g_{-\sigma}$. The threshold $\epsilon_c$ concerns this total-power receiver.

The lower side is universal. The minority part $a_-$ is a degree-$L$ spin-$(-\sigma)$ field with
$\int|a_-|^2\,d\Omega=\epsilon$, i.e.\ coefficients $b_{\ell m}$ with
$\sum|b_{\ell m}|^2=\epsilon$. By Cauchy--Schwarz on the pointwise expansion,
\begin{equation}
 4\pi|a_-(\bm u)|^2\le 4\pi\Big(\sum|b_{\ell m}|^2\Big)\sum_{\ell m}|{}_{-\sigma}Y_{\ell m}(\bm u)|^2
 =\epsilon\,d_L,\qquad d_L=L(L+2),
 \label{eq:evalbound}
\end{equation}
where the addition theorem gives $\sum_{\ell m}|{}_{-\sigma}Y_{\ell m}|^2=d_L/4\pi$. At a zero of the
majority field the total gain is the minority gain alone, at most $\epsilon d_L$, which sits below
threshold while $\epsilon d_L<\eta$; the outage stays positive there for every majority field. This is
the lower side, $\epsilon_c\ge\eta/d_L$.

The upper side is constructive, and we prove it for the attaining family of Sec.~\ref{sec:family},
whose gain falls below threshold only on two polar caps of radius $d_*=O(1/L)$
[Eqs.~\eqref{eq:famacc}--\eqref{eq:famalpha}, with the threshold read at $\eta/(1-\epsilon)$]. Give the
minority two bumps, one per pole,
\begin{equation}
 \psi_N(\theta)=\frac{1+\cos\theta}{2}\,K_{n'}(\theta),\qquad
 K_{n'}(\theta)=\Big[\frac{\sin(n'\theta/2)}{n'\sin(\theta/2)}\Big]^2,
 \label{eq:bump}
\end{equation}
and its mirror image $\psi_S$ at the south pole, carrying distinct azimuthal indices. Each bump is
admissible: $(1+\cos\theta)/2$ times an even trigonometric polynomial of degree $n'-1\le L-1$ lies in
the degree-$L$ minority space, is regular and maximal at its own pole, and carries the minority
bundle's forced double zero at the opposite one. Distinct azimuthal indices make the worst relative
phase explicit, $g_{\rm tot}\ge(1-\epsilon)g_M+\big(|\psi_N|-|\psi_S|\big)^2$ pointwise in the
appropriate normalization. Choosing $n'\simeq\pi/d_*$ puts the whole majority cap inside the first
Fej\'er lobe, where $K_{n'}\ge(2/\pi)^2$ and the far bump contributes only at relative order
$n'^{-2}$; the norm bound $\avg{\psi_N^2}\le\pi^2/(2n'^2)$, obtained exactly as in
Sec.~\ref{sec:family}, then floors the bump's per-unit-power gain at $2n'^2/\pi^2$ times the stated
cap constants. Collecting the factors, a minority power $\epsilon=C_\eta\,\eta/d_L$ removes total
outage, with $C_\eta$ explicit, of order $10^3$ at $\eta=1/2$, valid once $\epsilon\le1/10$; the chain
is deliberately loose. Evaluated directly, the same construction removes total outage already at
$\epsilon$ between $3.1$ and $3.5$ times $\eta/d_L$ for $L=2$ to $16$, with the kernel order chosen
numerically. Hence $\eta/d_L\le\epsilon_c\le C_\eta\,\eta/d_L$ [Eq.~(7)]: $\epsilon_c=\Theta(\eta/d_L)$,
the two sides differing by a factor independent of $L$, and neither the constant of the chain nor the
numerical $3.5$ is claimed sharp. The two helicities together form a trivial bundle, which is why their
sum may be nowhere dark; the escape is a change of bundle rather than a repair of the scalar gain.

\section{Self-dual dielectric spheres}
\label{sec:mie}

A sphere whose electric and magnetic Mie coefficients coincide, $a_\ell=b_\ell$ for $1\le\ell\le L$,
satisfies the dual-symmetry (generalized Kerker) condition and preserves helicity: illuminated by one
circular polarization it scatters into the same helicity, with no cross-helicity
component~\cite{Fu2013,NietoVesperinas2011}. The scattered far field is then a pure helicity-$\sigma$
section of order $L$, with the topologically forced null in the backward direction. For a homogeneous
sphere $a_\ell=b_\ell$ cannot hold at every $\ell$, but a radially layered sphere has enough freedom to
enforce it up to a chosen order; we solve the Aden--Kerker coated-Mie equations for the layer indices and
radii that set $a_\ell=b_\ell$ for $\ell\le L$ and suppress $\ell>L$. Table~\ref{tab:mie} lists the three
designs, with electrical size $ka$, the fraction $P_{\rm opp}$ of scattered power in the opposite
helicity, and the resulting observables. The dual dipole ($L=1$) is closed form,
$g(\theta)=\tfrac34(1+\cos\theta)^2$, with a second-order backward zero.

\begin{table}[h]
\centering\small
\begin{tabular}{@{}clcccc@{}}
\toprule
$L$ & material & $ka$ & $P_{\rm opp}$ & $\fout$ & $\Dlt(\rho{=}10)$\\
\midrule
1 & Si (homogeneous)       & 0.77 & $1.7\times10^{-4}$ & 0.42 & 0.69\\
2 & Ge/TiO$_2$ core--shell & 1.33 & $5.2\times10^{-5}$ & 0.22 & 0.29\\
3 & Si/SiN core--shell     & 3.06 & $2.9\times10^{-4}$ & 0.25 & 0.37\\
\bottomrule
\end{tabular}
\caption{Self-dual dielectric spheres realizing the low-order single-helicity fields; $P_{\rm opp}$ is the
fraction of scattered power in the opposite helicity.}
\label{tab:mie}
\end{table}

We verify each design with a full-wave finite-difference time-domain scattering simulation under
circularly polarized plane-wave illumination, projecting the far field onto the two helicity components.
The simulations reproduce the backward null and the target multipole content, with residual
opposite-helicity power at the level quoted in Table~\ref{tab:mie} and higher-order ($\ell>L$) leakage
below $2\times10^{-3}$. These passive spheres lie above the optimal frontier of Table~\ref{tab:opt}, and
not monotonically closer with order (Table~\ref{tab:sphere_vs_frontier}): a Kerker scatterer is
helicity-pure but not coverage-optimal.

\begin{table}[h]
\centering\small
\begin{tabular}{@{}rcccc@{}}
\toprule
$L$ & $\fout$ (sphere) & $\fout$ (frontier) & $\Dlt$ (sphere) & $\Dlt$ (frontier)\\
\midrule
1 & 0.42 & 0.18   & 0.69 & 0.20\\
2 & 0.22 & 0.052  & 0.29 & 0.075\\
3 & 0.25 & 0.020  & 0.37 & 0.045\\
\bottomrule
\end{tabular}
\caption{Each self-dual sphere's outage and deficit against the optimal frontier at the same $L$
(Table~\ref{tab:opt} for $L\ge2$; the closed-form $L=1$ optimum of Sec.~\ref{sec:numerics}). The gap
widens with $L$, roughly twelvefold in outage by $L=3$.}
\label{tab:sphere_vs_frontier}
\end{table}

The shortfall is not a materials limitation but a selection rule. A plane wave of definite helicity,
expanded in vector spherical harmonics, populates only one azimuthal number per order, and an isotropic
scatterer conserves it, so the sphere's field is confined to an $L$-dimensional slice of the full
$d_L$-dimensional coefficient space, one coefficient per order. The optimum needs more than that slice:
the certified $L=1$ optimum [$t=\pi/4$ in Eq.~\eqref{eq:l1family}] is an equal superposition of the two
$m=\pm1$ components of the multiplet, a mixing no single plane wave scattered from an isotropic particle
can produce. The sphere accordingly sits at the coherent endpoint of the family, $t=0$ exactly, rather
than partway toward the anticoherent optimum. Reaching the frontier calls for a different platform:
multiple coherent beams, a shaped (non-spherical) scatterer, or a reconfigurable aperture of radius
$R\simeq L/k$, which commands every coefficient and traces the frontier itself. The code, designs, and
postprocessed simulation files are available at
\url{https://github.com/hyoseokp/single-helicity-coverage-bound}; the raw full-wave field data are
available from the author upon reasonable request.

\makeatletter\immediate\write\@auxout{\string\citation{apsrev42Control}}\makeatother
\bibliographystyle{apsrev4-2}
\bibliography{references}

\end{document}